\newcommand{\K}{\mathbb{K}}
\newcommand{\R}{\mathbb{R}}
\newcommand{\E}{\mathbb{E}}
\newcommand{\I}{\mathcal{I}}
\newcommand{\J}{\mathcal{J}}
\newcommand{\B}{\mathfrak{B}}
\let\subset\subseteq
\let\supset\supseteq
\let\cref\Cref
\begin{document}
\title{Realizability of Rectangular Euler Diagrams}
\author{Dominik Dürrschnabel\inst{1,2}\orcidID{0000-0002-0855-4185} \and
    Uta Priss\inst{3}\orcidID{0000-0003-0375-429X}}
\authorrunning{Dürrschnabel and Priss}
\institute{Knowledge \& Data Engineering Group,
    University of Kassel, Kassel,  Germany\\
    \and
    Interdisciplinary Research Center for Information System Design, Kassel, Germany\\
    \email{duerrschnabel@cs.uni-kassel.de}
    \and
    Faculty of Computer Science, Ostfalia University of Applied Sciences, Wolfenbüttel, Germany\\
    \email{u.priss@ostfalia.de}}
\maketitle              %
\begin{abstract}
    Euler diagrams are a tool for the graphical representation of set relations.
    Due to their simple way of visualizing elements in the sets by geometric containment, they are easily readable by an inexperienced reader.
    Euler diagrams where the sets are visualized as aligned rectangles are of special interest.
    In this work, we link the existence of such rectangular Euler diagrams to the order dimension of an associated order relation.
    For this, we consider Euler diagrams in one and two dimensions.
    In the one-dimensional case, this correspondence provides us with a polynomial-time  algorithm to compute the Euler diagrams, while the two-dimensional case is linked to an NP-complete problem which we approach with an exponential-time algorithm.
    \keywords{Order dimension  \and Rectangular Euler diagrams \and Formal concept analysis \and Containment orders.}
\end{abstract}

\section{Introduction}

Set diagrams are commonly used for the visualization of sets  in set theory.
Thereby, two kinds of set diagrams have proven themselves to be useful.
Venn diagrams, named after John Venn (1834-1923), can be used to visualize elementary set theory by illustrating the sets as geometric objects.
Hereby, frequently, the shapes of circles or ellipsis are used.
They show all possible logical relations as intersections of their regions, i.e., in the case of $n$ sets there are $2^n$ different intersecting zones.
Because of this exponential nature, they reach the limit of what is perceived as readable for values of $n$ as small as 4 or 5.
Euler diagrams however are not subject to  the same restriction.
Only regions which contain a common object have to intersect.
This additional degree of freedom makes them much more suitable for the representation of more complex set relations.
Diagrams of this kind were first introduced by Leonhard Euler (1707-1783).
As a second interpretation, they permit the visualization of binary datasets, such as formal contexts from the research realm of formal concept analysis.
Each set corresponds to an attribute and each element in the sets to an object.
The incidence between objects and attributes is depicted by containment.

\afterpage{
    \setcounter{footnote}{-1}
    \begin{figure}[t]
        \centering
        \includegraphics[width=.7\textwidth]{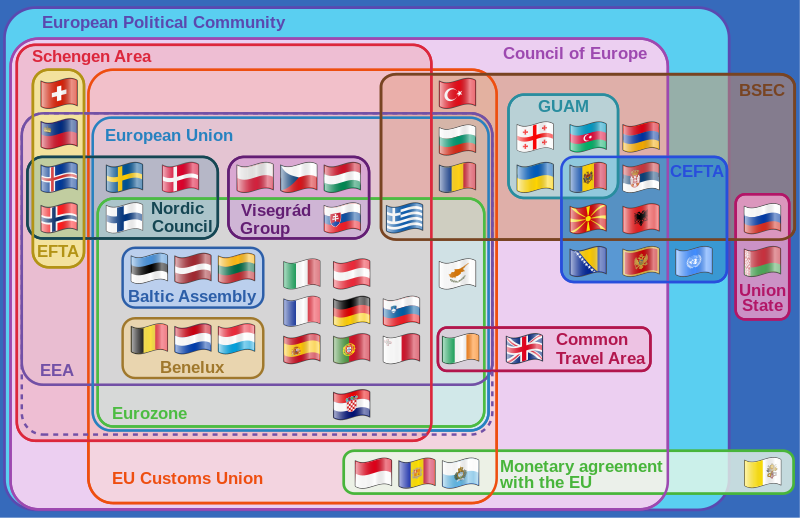}
        \caption{Supranational European bodies represented by a two-dimensional Euler diagram.\protect\footnotemark}
        \label{fig:europa}
    \end{figure}
    \footnotetext{\url{https://commons.wikimedia.org/wiki/File:Supranational_European_Bodies.svg}.\\Accessed January 19, 2024.}
}
One especially readable type of Euler diagrams are the ones, where the sets are represented by aligned rectangles.
Because of their readability, those diagrams are often used to depict set relations without further explanation on how to read them.
One common example is the membership of countries in supranational bodies, as one can see for example in \cref{fig:europa}.
Here, European countries are related to their supranational bodies.
To see that this diagram is in fact an Euler and not a Venn diagram, note that, for example, Eurozone and EFTA do not intersect.

Even though these diagrams are commonly used in applications, their automatic generation is surprisingly not very well investigated.
Until today, most rectangular Euler diagrams are still hand-crafted.
To our knowledge, the first full  research paper dealing with the automatic generation of rectangular Euler diagrams is due to Paetzold et al.\ from 2023~\cite{Paetzold.2023}.
There, a mixed-integer optimization scheme is used to compute the diagrams.
As it is not always possible to represent each family of sets using a rectangular Euler diagram, they iteratively split the diagram into multiple subdiagrams if they cannot find a suitable solution.
As they do not have a reliable way to check for the existence of diagrams for a given dataset, they do a split if their algorithm does not find a solution within 50 seconds.

In this work, we give a criterion for the existence of rectangular Euler diagrams, using the mathematical theory of order dimension.
In doing so, we consider the problem in two different settings.
First, we consider the problem in one dimension.
We show that an associated order relation, which we introduce in this work, has order dimension two if and only if the dataset has a one-dimensional Euler diagram.
This characterization is given in a constructive way, which allows it to be directly translated into a polynomial time algorithm.
Then, we consider the two-dimensional setting, which is the same setting investigated by Paetzold et al.~\cite{Paetzold.2023} and depicted in \cref{fig:europa}.
Similarly to the one-dimensional case, we construct an associated partial order and use it to give a criterion for the exis\-tence of the Euler diagrams based on its order dimension.
In comparison to the one-dimensional case however, this is based on the order dimension being  four, which is NP-complete to decide.
Thus, the algorithm that follows from this condition does not run in polynomial time.

The rest of this work is structured as follows.
In \cref{sec:related_work} we embed our work in previous research.
Then, we introduce the necessary mathematical notations in \cref{sec:math}.
In \cref{sec:1d,sec:2d} we investigate the one- and two-dimensional Euler diagrams respectively, before we discuss the time complexity of the resulting algorithms in \cref{sec:alg}.
Finally, in \cref{sec:concl} we conclude this work.

\section{Related Work}
\label{sec:related_work}

A (partially) ordered set $(X,\leq)$ is called a geometric containment order if there is a mapping from its elements into shapes of the same type in a Euclidean space of finite dimension.
For each pair of elements it holds that $x < y$ if and only if the shape corresponding to $x$ is properly included in the shape of $y$.
Geometric containment orders are closely related to this work.
In our work,  attributes form an interval or rectangle type of geometric containment order.
Furthermore, the objects are represented as points in the same space and the incidence between objects and attributes is represented by geometric containment.
The first result relevant to our work in that regard can be found in the very first paper introducing order dimension~\cite{Dushnik.1941}.
In this work, Dushnik and Miller note that interval containment orders are exactly the orders of dimension of at most two.
A gen\-er\-al\-i\-za\-tion of this result is given in the work of Golumbic and Scheinerman~\cite{Golumbic.1989}, where they show that orders of dimension of at most $2k$ can be characterized as the geometric containment orders with aligned boxes in an at most $k$-dimensional Euclidean space.
Following this, we relate the one-dimensional Euler diagrams to orders of dimension two, and the two-dimensional Euler diagrams to orders of dimension four.
More results on the order dimension of geometric containment orders of various types can be found in a survey by Fishburn and Trotter~\cite{Fishburn.1998}.

For a general overview on Euler diagrams, we refer the reader to a survey of Rodgers~\cite{Rodgers.2014}.
There, Euler diagrams are introduced using the very common definition of being ``finite sets of labeled and closed curves''.
Thus, the Euclidean space is divided into regions or zones by the curves.
This is in contrast to the definition used in this work where we refer to Euler diagrams as  intervals and points on the real line or as rectangles and points in the two-dimensional Euclidean space.
Our reason to deviate from the common notion is that it is easier to work with our definition  in the context of geometric containment.
The only setting which is similar to ours and was previously investigated is  the work of Paetzold et al.~\cite{Paetzold.2023}.
Other approaches that incorporate rectangles in the generated Euler diagrams are given by Riche and Dwyer~\cite{Riche.2010} and Yoghourdjian~\cite{Yoghourdjian.2016}.
Furthermore, there are adjacent approaches for the generation of Euler diagrams if we forgo the condition of the sets being rectangular.
Thereby, an often-applied strategy~\cite{Rodgers.2008,Kehlbeck.2022} is to use the dual graph of the diagram.
This graph contains a vertex for each intersection of regions and two vertices are connected by an edge if their corresponding regions are neighboring.

The canonical way to represent a formal context in formal concept analysis is the concept lattice, for which drawing approaches based on its order dimension exist~\cite{Durrschnabel.2023}.
An alternative approach was developed under the name ordinal factor analysis~\cite{Ganter.2012}.
For this approach several visualization methods based on the order dimension have been proposed recently~\cite{Durrschnabel.2023.2,Durrschnabel2023.3}.
In his famous theorem~\cite{schnyder1989planar}, Schnyder gives a condition on the existence of a planar graph based on the order dimension of its incidence poset.
From this, also an algorithm to compute a plane embedding follows, thus our work is conceptually similar to his.
One especially noteworthy result related to our work is due to Petersen~\cite{Petersen.2004,Petersen.2008} who investigates $S$-alphabets which coincide with our one-dimensional Euler diagrams.
Her work relates the realizability as an $S$-alphabet to the planarity of an associated lattice.
As for lattices planarity and two-dimensionality are equivalent, our work can be seen as an extension of hers.
In fact, the lattice proposed in her work is the Dedekind-MacNeille completion of the order relation discussed in \cref{sec:1d} on one-dimensional Euler diagrams of this work.
In \cite{Priss24}, the authors of this paper describe connections between order theory and Euler diagrams and give some negative results on the relation between interval containment orders and one-dimensional Euler diagrams.
Finally, in the realm of formal concept analysis, there is recent interest in using formal concepts for the generation of Euler diagrams~\cite{Priss.2023}.

\section{Mathematical Foundations}
\label{sec:math}

For binary relations $R$ the notions $(x,y)\in R$ and $xRy$ can be used interchangeably.
A homogeneous relation $\leq$ on a set $X$ that is reflexive ($\forall x \in X : x \leq x$), antisymmetric ($\forall x,y \in X:x \leq y \wedge x \leq y \Rightarrow x = y$), and transitive ($\forall x, y,z \in X : x \leq y \wedge y \leq z \Rightarrow x \leq z$) is called an \emph{order relation} on $X$. We call the tuple $(X, \leq)$ an \emph{ordered set}.
If $x \leq y$ and $x \neq y$ we write $x < y$.
Furthermore, we sometimes use the notation $y \geq x$ instead of $x \leq y$ and $y > x$ instead of $x < y$.
Two elements $x,y \in X$ are called \emph{comparable} if $x \leq y$ or $y \leq x$, otherwise they are called \emph{incomparable}.
A \emph{linear order} on $X$ is an order relation, where all pairs of elements in $X$ are comparable.
For an ordered set $(X, \leq)$ a linear order $\leq_l$ on $X$ is called a \emph{linear extension} of $\leq$, if $\leq_l \supset \leq$.
A family $\mathcal{R}=({\leq_1},\ldots,{\leq_t})$ of linear extensions of $\leq$ is called a \emph{realizer} of $\leq$ if $\leq = \bigcap_{i=1}^t{\leq_i}$.
The \emph{order dimension} of $\leq$ is defined as the least cardinality of a realizer and denoted by $\dim(X,{\leq})$.

A formal context $\K$ is a triple $\K=(G,M,I)$ where $G$ is a set called \emph{objects}, $M$ a set called \emph{attributes}, and $I\subset G \times M$ an \emph{incidence relation}.
For a set $A\subset G$, its \emph{derivation} is given by $A'=\{(m \in M \mid \forall g \in A : (g,m) \in I)\}$. Dually, for $B \subset M$, let $B'=\{g \in G \mid \forall m \in B : (g,m) \in I\}$.
A \emph{formal concept} is a tuple $(A,B)$ with $A \subset G$ and $B \subset M$ such that $A' = B$ and $B' = A$. We call $A$ the \emph{extent} and $B$ the \emph{intent} of the concept, and denote the set of all concepts by $\B(K)$.
The ordered set $(\B,\leq)$ with $(A,B) \leq (C,D)$ iff $A \subset B$ is called the \emph{concept lattice} of $\K$.
In this work, we assume all contexts to be \emph{clarified}, i.e., if multiple attributes share the same derivation, they are grouped to a single attribute, and, if multiple objects share the same derivation, they are grouped to a single object.
Such an assumption can be made, as going back from a clarified to the original context is only a matter of duplication.
We denote by $\I$ the set of all closed intervals in the real numbers $\R$.
A \emph{rectangle} is a tuple $(x,y)$ where $x,y \subset \R$ are closed intervals.
We say that a point $(a,b) \in R \times R$ is \emph{contained} in a rectangle $(x,y)$ if $a \in x$ and $b \in y$.

\section{One-Dimensional Euler Diagrams}
\label{sec:1d}

The first kind of Euler diagram that we are investigating in this work are \emph{one-dimensional}, which means that their sets can be aligned along the real line.
These diagrams are the essential building blocks for the more complex two-dimensional variant that we will discuss in the next section.
In these diagrams, attributes are represented by intervals and objects by points.
Formally, they are defined as follows.

\begin{definition}
    A \emph{one-dimensional Euler diagram} $\E=(\J, P)$ is a set of closed intervals  $\J \subset \I$ and a set of points $P \subset \R$.
    Let $\K=(G,M,I)$ be a formal context.
    We say that $\mathbb{E}$ \emph{corresponds} to $\K$ with bijective maps $\phi\colon G \to P$ and $\psi \colon M \to \J$ and for all objects $g$ and all attributes $m$ it holds that $(g,m)\in I$ iff $\phi(g) \in \psi(m)$.
    We say that $\K$ \emph{can be represented by a one-dimensional Euler diagram} if there is a one-dimensional Euler diagram corresponding to $\K$.
\end{definition}
\begin{figure}[!b]
    \raisebox{5em}{\begin{cxt}%
            \atr{EEA}%
            \atr{EFTA}%
            \atr{European U.}%
            \atr{Eurozone}%
            \atr{Schengen A.}%
            \obj{x.x..}{Bulgaria}%
            \objcol{..xxx}{Croatia}{gray}%
            \obj{x.xxx}{Germany}%
            \obj{x.xx.}{Ireland}%
            \obj{xx..x}{Norway}%
            \obj{x.x.x}{Sweden}%
            \obj{.x..x}{Switzerland}%
        \end{cxt}}
    \hfill
    \includegraphics[width=25em]{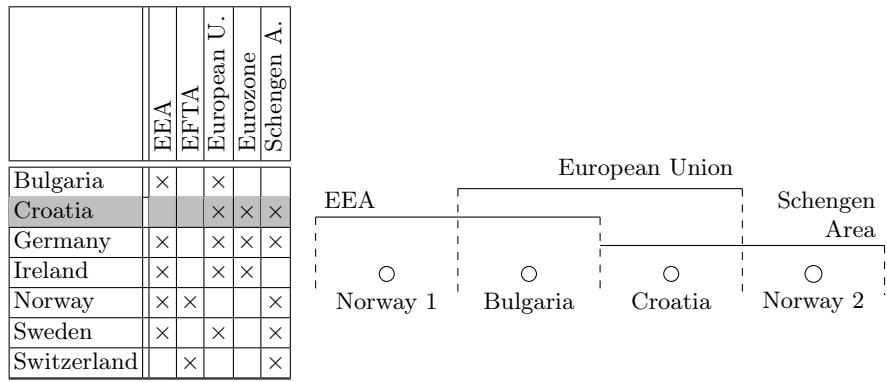}

    \caption{\emph{Left:} A formal context representing how some European states are members of supranational European bodies. \emph{Right:} A figure supporting the argument that the whole context cannot be represented by a one-dimensional Euler diagram.}
    \label{fig:eu}
\end{figure}

One dimensional Euler diagrams are well suited for a readable representation of the data of a representable corresponding formal context.
Consider for example the formal context in \cref{fig:eu} without the object Croatia.
We can represent this context with the corresponding Euler diagram from \cref{fig:1d}.
Even a reader with no prior training should be able to recognize the memberships of the different states purely from this diagram.

Note that it is however not possible for every formal context to be represented by a one-dimensional Euler diagram.
In fact, the same context from \cref{fig:eu} cannot be represented if we include all objects.
To see this, consider only the three attributes European Union, Schengen, and EEA and the three objects Croatia, Bulgaria, and Norway.
A visualization supporting the following argument is depicted in the right part of \cref{fig:eu}.
Start with the interval representing the European Union. This interval has to contain Bulgaria and Croatia.
Without loss of generality we can assign to Bulgaria a position left of Croatia.
On the one hand, Croatia is not in the EEA and Norway not in the European Union, thus, Norway has to be positioned left of the European Union (Norway 1 in \cref{fig:eu}).
Furthermore, Bulgaria is not in the Schengen Area in which Norway however is, thus Norway has to be positioned on the right of the European Union  (Norway 2 in \cref{fig:eu}).
Thus, this context cannot be represented using a one-dimensional Euler diagram.

\begin{figure}[t]
    \centering
    \includegraphics[width=\linewidth ]{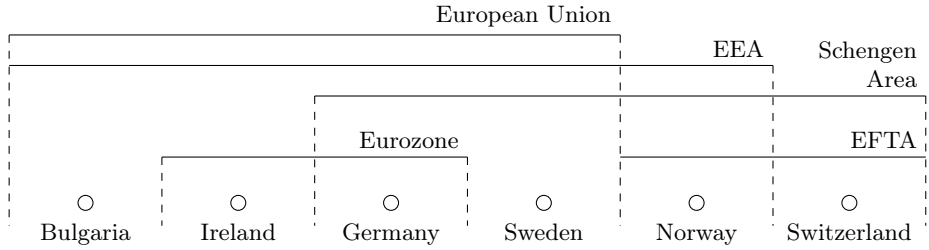}
    \caption{A graphical representation of a one-dimensional Euler diagram. The horizontal lines correspond to the intervals and the disks correspond to the points of the Euler diagram. The intervals are disposed in vertical direction for better readability, the dashed lines highlight where an interval starts or ends. }
    \label{fig:1d}
\end{figure}

Before we go in depth on the realizability of one-dimensional Euler diagrams we investigate a property that would be desireable for the diagrams to become more readable.
That is that the intersections of the intervals should correspond to the intents of the formal concepts.
To determine this, we define the property of being conceptual as follows.

\begin{definition}
    A one-dimensional Euler diagram $\E=(\J,P)$ is called \emph{conceptual} if for all $a,b$ with $a$ and $b$ being boundaries of intervals in $\J$ and $a < b$ there is some element $p \in P$ with $a < p < b$.

\end{definition}

So, being conceptual means that there is no minimal region in the diagram which does not contain a point. In Euler diagram research, such diagrams are sometimes referred to as diagrams of \emph{existential import}.
The following lemma shows, that this property is already sufficient to guarantee that the concepts really do correspond to the interval intersections.

\begin{lemma}
    Let $\K=(G,M,I)$ be a formal context corresponding to an Euler diagram $\E=(\J,P)$ under the maps $\phi$ and $\psi$. If $\E$ is conceptual, the map
    \begin{align*}
        \chi:
        \begin{cases}
            \B(\K) & \mapsto \{\bigcap \tilde{\J}\mid \tilde{\J} \subset \J \} \\
            (A,B)  & \to \bigcap_{b \in B}\psi(b)
        \end{cases}
    \end{align*}
    is a one-to-one correspondence, and for all concepts $(A,B) \in \B(\K)$ it holds that $\phi(a) \in \chi(A,B)$ if and only if $a \in A$.
\end{lemma}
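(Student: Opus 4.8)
The plan is to show that $\chi$ is a well-defined bijection and then verify the point-membership statement. The overall strategy rests on a single key observation: if the diagram is conceptual, then the intersection of a subfamily of intervals determines, and is determined by, the set of points it contains.

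First I would establish well-definedness. Given a concept $(A,B)$, the image $\chi(A,B) = \bigcap_{b \in B}\psi(b)$ is an intersection of closed intervals, hence a closed interval (possibly empty, but here nonempty since $A = B'$ and, assuming $A \neq \emptyset$ handled by the empty-intersection convention), and it lies in the target set $\{\bigcap \tilde{\J} \mid \tilde{\J} \subset \J\}$ by construction. The crucial link I would prove first is the point-membership claim, since the rest follows from it. I claim $\phi(a) \in \chi(A,B)$ iff $a \in A$. Unfolding the definition, $\phi(a) \in \bigcap_{b \in B}\psi(b)$ means $\phi(a) \in \psi(b)$ for every $b \in B$, which by the correspondence property of $\E$ means $(a,b) \in I$ for all $b \in B$, i.e., $a \in B' = A$ (using that $(A,B)$ is a concept). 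This direction is essentially a routine unfolding and does not yet use conceptuality.

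Next I would prove injectivity and surjectivity of $\chi$. For injectivity, suppose $\chi(A_1,B_1) = \chi(A_2,B_2)$. By the membership equivalence just established, the set of points contained in this common interval is exactly $\phi(A_1)$ and also exactly $\phi(A_2)$; since $\phi$ is a bijection, $A_1 = A_2$, and since a concept is determined by its extent, $(A_1,B_1) = (A_2,B_2)$. For surjectivity I would take an arbitrary element $\bigcap \tilde{\J}$ of the codomain, set $B = \psi^{-1}(\tilde{\J})$ and $A = B'$, and argue that $(A'', A') = (A, B'')$ recovers a concept whose image under $\chi$ is the given interval. The point I must check here is that passing to the closure $(B'', B')$ does not change the interval $\bigcap_{b \in B}\psi(b)$.

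The main obstacle, and the only place conceptuality is genuinely needed, is the surjectivity argument and the accompanying claim that an interval-intersection is recovered faithfully from its point set. Conceptuality guarantees that every minimal region cut out by interval boundaries contains a point, so two distinct intersection-intervals can never contain the same set of points: if $\bigcap \tilde{\J}_1 \neq \bigcap \tilde{\J}_2$ their symmetric difference contains a nonempty region bounded by interval endpoints, which by the conceptual property harbors a point distinguishing the two. This is what forces the map from intersection-intervals to point-sets to be injective, and it is the step I expect to require the most care, since without conceptuality an ``empty'' minimal region would let two different concepts (or a concept and a non-concept intersection) collapse to the same interval while differing on which points they contain. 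Once this separation property is in hand, surjectivity follows by matching each codomain interval to the concept generated by the attributes whose intervals produce it, and the proof is complete.
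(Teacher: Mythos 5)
Your proof is correct and takes essentially the same approach as the paper: establish the point-membership equivalence by unfolding the correspondence $\phi(a)\in\psi(b)\iff (a,b)\in I$, deduce injectivity from it, and invoke conceptuality only for surjectivity. The only difference is organizational --- the paper closes on the object side, taking $A=\phi^{-1}(P\cap K)$ and deriving a contradiction from a point of $K\setminus\chi(A'',A')$, while you close on the attribute side via $B=\psi^{-1}(\tilde{\J})$ and a separation property for intersection-intervals; both reduce to the same fact that in a conceptual diagram an intersection-interval is determined by the points it contains (and both, like the paper, quietly assume every nonempty intersection region is non-degenerate so that conceptuality actually supplies a point in it).
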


\begin{proof}
    We first show the second part of the lemma. For this let $a\in A$ for some concept $(A,B)$.
    For all $b \in B$, if $(a,b)\in I$, then it holds that $\phi(a) \in \psi(b)$.
    Thus, $\phi(a) \in \chi(A,B)$.
    Suppose now that $a \notin A$.
    Then there is a $b \in B$ with $(a,b) \notin I$ and thus $\phi(a) \notin \psi(b)$ and hence $\phi(a)\notin \chi(A,B)$.

    We now show that $\chi$ is injective. Let $(A,B)$ and $(C,D)$ be two different formal concepts, i.e., $A \neq C$.
    Without loss of generality, there is an element $a \in A \setminus C$.
    Then $\phi(a) \in \chi(A,B)$ and $\phi(a)\notin \chi(C,D)$ which shows injectivity.

    Finally, we show that $\chi$ is surjective.
    For this, let $K = \bigcap \tilde\J$ for some $\tilde\J \subset \J$.
    Let $L=P \cap K$ which is, by assumption, not empty.
    Let $A$ be the set of objects corresponding to $L$.
    We now show, that for the concept $(A'',A')$ it holds that $\chi(A'',A') = K$.
    We know, that for all $a \in A$ it holds that $\phi(a) \in L \subset K$.
    Thus, $K\subset \bigcap_{b \in A'} \psi(b)$ and therefore, $\chi(A'',A') \subseteq K$.
    Assume that $\chi(A'',A') \subsetneq K$.
    As $\E$ is conceptual, $(K \setminus\chi(A'',A')) \cap P \neq \emptyset$.
    Let $a \in A$ be such that $\phi(a) \in K \setminus\chi(A'',A')$.
    But if $\phi(a) \in K$, then also $\phi(a) \in L$ and $a \in A \subset A''$, a contradiction. $\hfill\square$
\end{proof}

Note that in the Euler diagram, there will be no geometric representation for the empty attribute set, which might correspond to the bottom concept.
However, the empty set is in any case a subset of the real line, thus, the previous lemma still holds.
It is also noteworthy that for the definition of conceptual one-dimensional Euler diagrams, we do not need a corresponding formal context.

We are going to provide a necessary and sufficient condition to check whether a formal context can be represented by a (conceptual) one-dimensional Euler diagram.
For this we require the notion of Euler-posets, which we define as follows.

\begin{algorithm}[t]
    \caption{Euler1D}
    \label{alg:1d}
    \begin{lstlisting}
def Euler1D$(G,M,I)$:
    $(X, \leq)$ $=$ EulerPoset$(G,M,I)$
    $L_1, L_2$ = 2D_Realizer$(X,\leq)$
    return EulerFromLinearExtension$(L_1,L_2)$

def EulerFromLinearExtension$(L_1,L_2)$:
    $i_1, i_2,k$ $=$ $0$
    while $i_1 < |X|$ or $i_2 < |X|$:
        if $L_1[i_1] \in M$:
            start$[L_1[i_1]]$ $=$ $k$
            $i_1$ $=$ $i_1 + 1$
        if $L_2[|X| - i_2 -1] \in M$:
            end$[L_2[|X| - i_2 -1]]$ $=$ $k$
            $i_2$ $=$ $i_2 + 1$
        if $L_1[i_1] \in G$ and $L_2[i_2] \in G$:
            position$[L_1[i_1]]$ $=$ $k+1$
            k = k + 2
            $i_1$ $=$ $i_1 + 1$
            $i_2$ $=$ $i_2 + 1$
    return (start$,$ end$,$ position)
    \end{lstlisting}
\end{algorithm}

\begin{definition}
    The \emph{Euler-poset} of a formal context $\K=(G,M,I)$ is defined as the ordered set $(G \cup M,\leq)$ such that
    \begin{enumerate}[label=\roman*)]
        \item $\forall g \in G, m \in M: g < m \iff (g,m) \in I$,
        \item $\forall m_1,m_2 \in M : m_1 \leq m_2 \iff m_1' \subset m_2'$,
        \item $\forall g_1,g_2 \in G : g_1 \nless g_2\wedge  g_1 \ngtr g_2$.
    \end{enumerate}
\end{definition}

Recall that we are working on clarified contexts, thus the Euler-poset is well defined.
We give an order diagram of the Euler-poset of a formal context from \cref{fig:eu} in \cref{fig:euler-poset}.
The order dimension of this Euler-poset is connected to the realizability of a formal context as we are going to show in the following lemmas and theorem.

\begin{figure}[t]
    \centering
    \includegraphics[width=\textwidth]{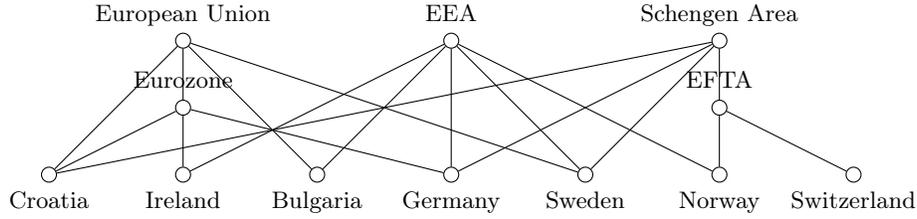}
    \caption{An order diagram of the Euler-poset of the formal context from \cref{fig:eu}.}
    \label{fig:euler-poset}
\end{figure}

\begin{lemma}
    \label{lemma2}
    If $\K$ is a formal context with an Euler-poset of order dimension two and realizer $\{L_1,L_2\}$, then the order of the objects of $\K$ in $L_1$ is reversed in $L_2$.
\end{lemma}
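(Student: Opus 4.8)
The plan is to reduce the claim to the standard behaviour of incomparable pairs in a two-dimensional order: a pair of distinct elements is incomparable exactly when the two linear extensions of a realizer order them oppositely. The whole statement then follows from a single structural feature of the Euler-poset, namely property iii), which says that any two distinct objects $g_1, g_2 \in G$ are incomparable, i.e.\ $g_1 \nless g_2$ and $g_1 \ngtr g_2$.

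First I would unfold what incomparability means for the realizer $\{L_1, L_2\}$. Since $\{L_1, L_2\}$ is a realizer, we have $\leq \,=\, {\leq_{L_1}} \cap {\leq_{L_2}}$, so for objects $g_1 \leq g_2$ holds in the Euler-poset if and only if both $g_1 \leq_{L_1} g_2$ and $g_1 \leq_{L_2} g_2$. Because $g_1$ and $g_2$ are incomparable, neither $g_1 \leq g_2$ nor $g_2 \leq g_1$ survives this intersection. From $g_1 \not\leq g_2$ I conclude that $g_2$ precedes $g_1$ in at least one of the two linear orders, and from $g_2 \not\leq g_1$ that $g_1$ precedes $g_2$ in at least one of them. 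As each of $L_1, L_2$ is a linear order on $G \cup M$, the two objects are comparable in each extension, and with only two extensions available these two requirements can be met simultaneously only if one extension orders $g_1$ before $g_2$ while the other orders $g_2$ before $g_1$. Hence every pair of objects is reversed between $L_1$ and $L_2$.

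Finally I would lift this pairwise statement to the full orderings. Restricting $L_1$ and $L_2$ to $G$ yields two linear orders on the objects, and a linear order is completely determined by its pairwise comparisons. Since every pair of objects appears in opposite relative order in the two restrictions, the restriction of $L_2$ to $G$ is exactly the reverse of the restriction of $L_1$ to $G$, which is the asserted inverse relationship.

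There is no genuine obstacle here; the content is just the clean translation of ``all objects are pairwise incomparable'' through the defining identity $\leq = {\leq_{L_1}} \cap {\leq_{L_2}}$ of a realizer. The only point that needs a moment of care is the use of the hypothesis that exactly two linear extensions are present, since it is precisely this that upgrades ``at least one extension in each direction'' to ``one extension in each direction,'' and thereby forces the reversal rather than merely permitting it.
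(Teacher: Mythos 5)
Your proof is correct and follows essentially the same route as the paper's: both arguments rest on the fact that distinct objects are pairwise incomparable in the Euler-poset combined with the realizer identity $\leq \,=\, {\leq_{L_1}} \cap {\leq_{L_2}}$, which forces each pair of objects to be ordered oppositely in the two extensions. The paper phrases it as a proof by contradiction while you argue directly, but the content is identical.
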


\begin{proof}
    For all objects $g, h \in G$ it holds that they are incomparable in the Euler-poset, if and only if $g \neq h$.
    Assume that the statement is not true, i.e., there is a pair $g \neq h$ with $g < h$ in both linear extensions.
    As the Euler-poset is two-dimensional, there is no way to break this comparability and thus $g < h$ in the Euler-poset, which is a contradiction.
\end{proof}

Given a two-dimensional Euler-poset, we define an algorithm to compute a one-dimensional Euler diagram in \cref{alg:1d}.
This algorithm iterates through both linear extensions $L_1$ and $L_2$ of a realizer simultaneously, one from the smallest element to the largest and the other from the largest element to the smallest.
By \cref{lemma2}, the objects appear in the same order.
Each attribute in the first linear extension marks the start-point of an interval while the same attribute in the second order marks its end-point.
In the following we show that this algorithm does in fact always compute a corresponding Euler diagram if the Euler-poset is two-dimensional.

\begin{lemma}
    \label{lem:alg}
    If the Euler-poset is two-dimensional, \cref{alg:1d} will compute a conceptual order diagram.
\end{lemma}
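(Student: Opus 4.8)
The plan is to prove that the triple $(\mathrm{start},\mathrm{end},\mathrm{position})$ returned by \cref{alg:1d} defines a one-dimensional Euler diagram corresponding to $\K$ which is, moreover, conceptual. I would fix the realizer $\{L_1,L_2\}$ and write $o_1,\dots,o_n$ for the objects in the order in which they appear in $L_1$. By the preceding lemma they appear in the reverse order in $L_2$, so the backward scan of $L_2$ meets them in the same order $o_1,\dots,o_n$; this is exactly what keeps the two scans synchronized on objects and what guarantees termination without deadlock, since the attributes are consumed independently by the two scans while, whenever both scans stand on an object, it is the same next unplaced object. Reading the loop directly, the counter $k$ is increased by $2$ precisely once per placed object, whence $\mathrm{position}(o_j)=2j-1$, while an attribute $m$ is assigned $\mathrm{start}(m)=2\cdot|\{g\in G : g<_{L_1} m\}|$ and $\mathrm{end}(m)=2\cdot|\{g\in G : m<_{L_2} g\}|$. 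In particular every interval boundary is even, every point is odd, and the interval $\psi(m)$ is $[\mathrm{end}(m),\mathrm{start}(m)]$.

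The crux of the argument, and the step I expect to be the main obstacle, is to show that two-dimensionality forces the extent $m'$ of each attribute to be a block of consecutive objects $\{o_i,\dots,o_k\}$ in this common order. I would argue by contradiction: if $o_a,o_c\in m'$ but $o_b\notin m'$ with $a<b<c$, then $o_c<m$ gives $o_b<_{L_1}o_c<_{L_1}m$ and $o_a<m$ gives $o_b<_{L_2}o_a<_{L_2}m$, so $o_b<m$ in the intersection order $L_1\cap L_2$, contradicting $o_b\notin m'$. The same two-linear-order comparison then locates the boundaries exactly: for every $j>k$ we get $o_j<_{L_2}o_k<_{L_2}m$, so $o_j<_{L_1}m$ would again force $o_j\in m'$; hence $m<_{L_1}o_j$, the objects preceding $m$ in $L_1$ are exactly $o_1,\dots,o_k$, and $\mathrm{start}(m)=2k$. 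Symmetrically $\mathrm{end}(m)=2(i-1)$, and the degenerate case $m'=\emptyset$ fits this scheme with $i-1=k$, giving a one-point interval containing no object.

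Given the block and the interval $[2(i-1),2k]$, the correspondence reduces to a parity check: $\mathrm{position}(o_j)=2j-1$ lies in $[2(i-1),2k]$ exactly for $i\le j\le k$, i.e.\ exactly when $o_j\in m'$, i.e.\ exactly when $(o_j,m)\in I$. Thus $\phi(g)\in\psi(m)\iff(g,m)\in I$, and since distinct objects receive distinct odd coordinates and distinct clarified attributes receive distinct blocks, hence distinct intervals, the maps $\phi$ and $\psi$ are bijections; the output is an Euler diagram corresponding to $\K$. Finally, for conceptuality I would take any two boundaries $a<b$, both even and lying in $\{0,2,\dots,2n\}$; then $a+1$ is odd, satisfies $a<a+1<b$, and equals $\mathrm{position}(o_{a/2+1})$ with $1\le a/2+1\le n$, so a point always lies strictly between any two boundaries. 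Hence the computed diagram is conceptual, which completes the proof.
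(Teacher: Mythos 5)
Your proof is correct and follows essentially the same route as the paper's: both rest on the synchronization of the two scans via the reversed object order and on the fact that $(g,m)\in I$ forces $g<m$ in both extensions while $(g,m)\notin I$ forces opposite orderings. Your version is more explicit than the paper's (exact coordinate formulas, the consecutive-block structure of extents, and the parity argument for conceptuality), and it correctly reads the algorithm's interval as $[\mathrm{end}(m),\mathrm{start}(m)]$, but the underlying argument is the same.
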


\begin{proof}
    In both $L_1$ and $L_2$ it holds for all $(g,m) \in I$ that $g < m$.
    As $L_1$ is iterated from the smallest to the largest element, the start point of $m$ is positioned before $g$.
    As $L_2$ is iterated from the largest to the smallest element, the end point of $m$ is positioned after $g$.
    On the other hand if $(g,m) \notin I$, they are incomparable in the Euler-poset.
    Thus, in one of the two linear extensions $g \leq m$  and in the other $m \leq g$.
    Let $g \leq m$ in $L_1$ and $m \leq g$ in $L_2$.
    Thus, the start and end point of the interval of $m$ get assigned smaller values than the position of $g$.
    In the other case with $m \leq g$ in $L_1$ and $g \leq m$ in $L_2$ both the start and end point of the interval corresponding to $m$ is assigned values larger than the position of $g$.

    We now show that the computed Euler diagram $\E$ is conceptual.
    Assume not, i.e., there are two elements $a, b$ with $a<b$ which are boundaries of some intervals and there is no point $p$ with $a < p < b$.
    But note that the positioning of the start and end-points counter of all intervals is set to the counter $k$, which only increases in size if an object is positioned in-between.
    Thus, $a = b$, a contradiction.
\end{proof}

We now know that it is always possible to compute the Euler diagram if the Euler-poset is two-dimensional.
The other direction, i.e., that the Euler-poset is also two-dimensional if a one-dimensional order diagram exists and thus a full characterization of these Euler diagrams is provided by the following theorem.

\begin{theorem}
    \label{thm:1d}
    Let $\K=(G,M,I)$. Then the following statements are equivalent.
    \begin{enumerate}[label=\roman*)]
        \item $\K$ can be represented by a conceptual one-dimensional Euler diagram.
        \item $\K$ can be represented by a one-dimensional Euler diagram.
        \item The Euler-poset of $\K$ is two-dimensional.
    \end{enumerate}
\end{theorem}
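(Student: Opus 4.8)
The plan is to prove the three statements equivalent by establishing the cycle $(i)\Rightarrow(ii)\Rightarrow(iii)\Rightarrow(i)$. Two of these are already in hand: $(i)\Rightarrow(ii)$ is immediate, since a conceptual one-dimensional Euler diagram is in particular a one-dimensional Euler diagram; and $(iii)\Rightarrow(i)$ is exactly what \cref{lem:alg} together with the correspondence argument in its proof delivers, because when the Euler-poset is two-dimensional \cref{alg:1d} first extracts a realizer $\{L_1,L_2\}$ and then outputs a diagram that is both conceptual and corresponding to $\K$. Hence the entire content of the theorem reduces to the implication $(ii)\Rightarrow(iii)$: from an arbitrary (not necessarily conceptual) one-dimensional Euler diagram for $\K$ I must manufacture a realizer of the Euler-poset consisting of two linear extensions, thereby bounding its order dimension by two.

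For $(ii)\Rightarrow(iii)$ I would start from the key structural observation that, after sorting the points left to right as $p_1<\dots<p_n$ (these are distinct because $\K$ is clarified), the set of points lying inside any interval $\psi(m)$ forms a \emph{contiguous block} $\{p_{i_m},\dots,p_{j_m}\}$ of the sorted sequence; consequently the extent $m'$ is this block, and derivation-containment $m_1'\subset m_2'$ coincides with nesting of index blocks $[i_{m_1},j_{m_1}]\subseteq[i_{m_2},j_{m_2}]$. This lets me replace each interval by a canonical tight interval whose literal nesting equals block nesting without changing any incidence, which repairs the only discrepancy between geometric containment and the order relation in the definition of the Euler-poset. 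I would then define the realizer by essentially inverting \cref{alg:1d}: let $L_1$ order all elements ascending by right index (objects by their own index), and let $L_2$ order them descending by left index, breaking ties by the complementary index. Checking $\leq = L_1\cap L_2$ is then a routine case analysis — objects come out as a reversed order in the two extensions, hence incomparable (matching condition $iii)$ of the Euler-poset); an object lies below an attribute in both extensions precisely when its point sits between the two endpoints, i.e. when it lies in the extent (matching incidence); and two attributes are comparable in the intersection exactly when their blocks nest (matching derivation-containment). Since a finite two-dimensional dominance order has order dimension at most two, this yields $\dim\le 2$, and the existence of two incomparable objects forces equality in the nondegenerate case.

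The step I expect to be the genuine obstacle is the single \emph{empty-extent} attribute. By clarification there is at most one attribute $z$ with $z'=\emptyset$, and in the Euler-poset it must lie below every attribute (because $\emptyset$ is contained in every extent) while remaining incomparable to every object. This relation cannot be produced by literal interval nesting, since an empty interval cannot be placed inside two disjoint intervals, so the endpoint-based realizer above simply does not account for $z$. My plan is therefore to build the realizer for $\K$ with $z$ removed and then re-insert $z$ by hand into both linear extensions, and the delicate part is verifying that a common slot always exists keeping both orders valid linear extensions with the correct intersection; equivalently, I would exhibit explicit two-dimensional coordinates placing $z$ below all attributes yet incomparable to all objects (which is possible, as small cases show, but requires positioning the remaining elements less rigidly than the canonical construction does). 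I anticipate this bookkeeping, rather than the main construction, to be the principal technical nuisance of the proof.
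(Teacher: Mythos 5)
Your overall architecture coincides with the paper's: the cycle $(i)\Rightarrow(ii)\Rightarrow(iii)\Rightarrow(i)$, with $(i)\Rightarrow(ii)$ trivial and $(iii)\Rightarrow(i)$ delegated to \cref{lem:alg}. Where you diverge is $(ii)\Rightarrow(iii)$. The paper disposes of it in two lines: form the containment order on $K=\J\cup\{[p,p]\mid p\in P\}$, assert that $(K,\subset)$ is isomorphic to the Euler-poset, and cite Dushnik--Miller's result that interval containment orders have dimension at most two. Your route is more self-contained, rebuilding the two linear extensions explicitly from sorted endpoints after shrinking each interval to the convex hull of the points it contains. Your tightening step is not pedantry: without it the paper's claimed isomorphism is literally false, since $m_1'\subset m_2'$ does not imply $\psi(m_1)\subseteq\psi(m_2)$ for the intervals as actually drawn. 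On the part of the poset with nonempty extents your argument is, if anything, more careful than the paper's, at the cost of redoing one direction of a classical result.

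The genuine gap is the empty-extent attribute, and it is not the bookkeeping nuisance you anticipate: the implication $(ii)\Rightarrow(iii)$ itself fails there, so no insertion argument can succeed in general. Take $G=\{g_1,g_2,g_3\}$ and $M=\{m_1,m_2,m_3,z\}$ with $(g_i,m_j)\in I$ iff $i=j$, and $z'=\emptyset$. This context is clarified and satisfies $(ii)$ (three short pairwise disjoint intervals around the three points, plus one interval off to the side for $z$). In the Euler-poset, however, $z$ lies strictly below $m_1,m_2,m_3$ (since $\emptyset\subset m_j'$) and is incomparable to each $g_i$. Suppose $\{L_1,L_2\}$ were a realizer. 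Incomparability of $z$ and $g_i$ forces $g_i<z$ in exactly one of the two extensions, so by pigeonhole two objects, say $g_1$ and $g_2$, satisfy $g_i<z$ in $L_1$; then $g_1<z<m_2$ and $g_2<z<m_1$ in $L_1$, so incomparability forces $m_2<g_1$ and $m_1<g_2$ in $L_2$, yielding the cycle $m_1<g_2<m_2<g_1<m_1$ in $L_2$. Hence the Euler-poset has dimension at least three and $(iii)$ fails. Your instinct that $z$ is the principal obstacle is exactly right, but the fix is not to place $z$ more cleverly; it is to exclude attributes with empty extent from the hypothesis (or treat them outside the poset), which is also the unstated assumption needed to make the paper's isomorphism claim, and the equivalence itself, correct.
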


\begin{proof}
    $ i) \Rightarrow ii)$: By definition.\\
    \noindent $ ii) \Rightarrow iii)$: Let $\E=(\J,P)$ be the one-dimensional Euler diagram. Consider the set of intervals $K = \J \cup \{[p,p]\mid p \in P\}$.
    Note that the ordered set $(K,\subset)$ is isomorphic to the Euler-poset of $\K$.
    A result of \cite{Dushnik.1941} is, that ordered sets that can be represented by containment of intervals are two-dimensional.\\
    \noindent $iii) \Rightarrow i)$: \cref{lem:alg} demonstrates that you can always compute the conceptual one-dimensional Euler diagram from a two-dimensional Euler-poset.
\end{proof}

This theorem does not only relate the existence of one-dimensional Euler diagrams to the existence of one-dimensional conceptual Euler diagrams, it also gives an easy-to-check criterion for their existence based on the two-di\-men\-sio\-nal\-ity of the Euler-poset.
Finally, \cref{alg:1d} guarantees the construction of Euler diagrams if they exist.

\begin{corollary}
    \cref{alg:1d} computes an Euler diagram if it exists.
\end{corollary}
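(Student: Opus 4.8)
The plan is to assemble the corollary directly from \cref{thm:1d} and \cref{lem:alg}, since together they already supply both the existence criterion and the construction; no new computation should be required. First I would assume that a one-dimensional Euler diagram for $\K$ exists. By the implication $ii) \Rightarrow iii)$ of \cref{thm:1d}, this assumption is equivalent to the Euler-poset $(X,\leq)$ of $\K$ being two-dimensional. This is the point at which the hypothesis is genuinely consumed: a realizer of cardinality two exists if and only if the poset has order dimension at most two, so under our assumption the call to \texttt{EulerPoset} followed by \texttt{2D\_Realizer} in \cref{alg:1d} succeeds and returns two linear extensions $\{L_1,L_2\}$ whose intersection is exactly $\leq$.

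Next I would invoke \cref{lem:alg}. Since the Euler-poset is two-dimensional, that lemma guarantees that the subroutine \texttt{EulerFromLinearExtension} applied to $L_1$ and $L_2$ terminates and produces a conceptual one-dimensional Euler diagram. As every conceptual diagram is in particular a one-dimensional Euler diagram, the returned triple (\texttt{start}, \texttt{end}, \texttt{position}) encodes a diagram $\E$, and the proof of \cref{lem:alg} already verifies that the assigned interval boundaries and point positions respect the incidence relation, namely that an object lies inside an attribute's interval precisely when the pair belongs to $I$. Hence $\E$ corresponds to $\K$, which is exactly the output the corollary claims.

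The step demanding the most care is not the mathematics but the correctness of the subroutine interface. I would need to argue that \texttt{2D\_Realizer} does in fact deliver two linear extensions forming a realizer whenever the poset admits one, and that the simultaneous forward/backward traversal in \texttt{EulerFromLinearExtension} is well-defined; the latter relies on the preceding lemma showing that the objects occur in mutually reversed order in $L_1$ and $L_2$, so that the two index counters advance in lockstep over the objects. Once these interface conditions are granted, the corollary follows purely by chaining the equivalence of \cref{thm:1d} with the constructive guarantee of \cref{lem:alg}, and I would keep the written proof correspondingly short.
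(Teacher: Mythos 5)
Your proposal is correct and matches the paper's intended argument: the corollary is stated without an explicit proof precisely because it follows by chaining the implication $ii) \Rightarrow iii)$ of \cref{thm:1d} with the constructive guarantee of \cref{lem:alg}, exactly as you do. The additional care you take over the subroutine interface (the realizer existing and the two counters advancing in lockstep via the reversed object order) is already covered by the lemma preceding \cref{lem:alg}, so nothing is missing.
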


Furthermore, as we will discuss in \cref*{sec:alg}, the algorithm runs in polynomial time.
Note that we are  working on the clarified context and thus, to achieve an Euler diagram of the not-clarified context, we might have to duplicate the intervals or points.
Consider once more the Euler-poset in \cref{fig:euler-poset} of the previously discussed Europe dataset.
A subposet of this poset is what is called the standard example $S_3$ in order theory, a poset that is well known to be three-dimensional.
Here, the standard example can be found on the objects Croatia, Bulgaria, Norway, and the attributes EEA, European Union, Schengen Area.
This confirms our previous observation that we cannot represent this formal context by a one-dimensional order diagram.

Note that the definition of Euler diagrams allows for regions where exactly the same set of intervals intersect.
If the diagram is conceptual, both of these regions have to contain a point corresponding to an object.
But then, both of these objects have the same derivations.
As we assumed that we are working on clarified contexts, our algorithm will not produce such a diagram.

\begin{figure}[t]
    \centering
    \begin{cxt}%
        \att{$m_1$}%
        \att{$m_2$}%
        \att{$m_3$}%
        \att{$m_4$}%
        \obj{xx.x}{$g_1$}%
        \obj{.xxx}{$g_2$}%
        \obj{...x}{$g_3$}%
        \obj{x...}{$g_4$}%
        \obj{..x.}{$g_5$}%
    \end{cxt}%
    \caption{A formal context with a concept lattice of order dimension two that cannot be realized by a one-dimensional Euler diagram.}
    \label{fig:counterexample}
\end{figure}

Finally, we are interested in how the dimensionality of the Euler-poset is related to the dimensionality of the concept lattice.
One might expect that they coincide.
However, we now show that this is not always the case.

\begin{theorem}
    For a formal context $\K=(G,M,I)$, let $e$ be the order dimension of the Euler-poset $E$ and $d$ be the order dimension of its concept lattice. Then it holds that $d \leq e \leq d+1$
    and these bounds are tight.
\end{theorem}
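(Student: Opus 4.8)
The plan is to reduce the statement to two applications of the classical fact that the Dedekind--MacNeille completion preserves order dimension, i.e.\ $\dim(\mathrm{DM}(Q))=\dim(Q)$ for every ordered set $Q$. Write $\gamma(g)=(g'',g')$ for the object concept of $g$ and $\mu(m)=(m',m'')$ for the attribute concept of $m$ (using $g'$ for $\{g\}'$), and let $P=\{\gamma(g)\mid g\in G\}\cup\{\mu(m)\mid m\in M\}$ be the subposet of $\B(\K)$ they span. Since the object concepts are join-dense and the attribute concepts are meet-dense in $\B(\K)$, we have $\B(\K)=\mathrm{DM}(P)$ and hence $d=\dim\B(\K)=\dim P$. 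This identity is the bridge to the Euler-poset $E$, because $\gamma$ and $\mu$ assemble into an order-preserving map $q\colon E\to\B(\K)$ with $q(g)=\gamma(g)$ and $q(m)=\mu(m)$, whose image is exactly $P$.

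For the lower bound $d\le e$ I would pass to $\mathrm{DM}(E)$, which has dimension $e$. The key step is to embed $P$ into $\mathrm{DM}(E)$ via a map $\iota$ that sends an attribute concept $\mu(m)$ to (the image of) $m$, and an object concept $\gamma(g)$ to the meet $\bigwedge_{m\in g'}m$ formed in $\mathrm{DM}(E)$. Reading each element of $\mathrm{DM}(E)$ as a cut, this meet is the cut whose object part equals the extent $g''$, so a cut in the image of $\iota$ is determined by its object part. Order preservation is immediate; reflection follows by comparing object parts: $\iota(\gamma(g))\le\iota(\gamma(h))$ forces $g''\subseteq h''$, i.e.\ $h'\subseteq g'$, which is exactly $\gamma(g)\le\gamma(h)$ in $P$, while the mixed and attribute cases reduce to the equivalence $g''\subseteq m'\iff (g,m)\in I$. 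Granting that $\iota$ is an order-embedding, $P$ is a subposet of $\mathrm{DM}(E)$, whence $d=\dim P\le\dim\mathrm{DM}(E)=\dim E=e$.

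For the upper bound $e\le d+1$ I would transport a realizer. Starting from a realizer $\{C_1,\dots,C_d\}$ of $\B(\K)$, pull each $C_j$ back along $q$ to a linear order $L_j$ on $G\cup M$, breaking the only possible ties — a pair $g,m$ with $\gamma(g)=\mu(m)$, which satisfies $(g,m)\in I$ — by placing $g$ below $m$. Each $L_j$ is then a linear extension of $E$, and their intersection is the pullback order $\tilde P$ in which $x\le y$ iff $q(x)\le q(y)$; this $\tilde P$ contains $E$ and differs from it only by the object--object and attribute-below-object comparabilities induced by the concept order. A single further linear extension $L_{d+1}$ of $E$ — all objects placed below all attributes (which is consistent with $E$, since in $E$ no attribute lies below an object), the objects sorted so that those with fewer attributes come first and the attributes sorted by extent inclusion — reverses all of these extra comparabilities simultaneously, so $\tilde P\cap L_{d+1}=E$ and $\{L_1,\dots,L_{d+1}\}$ realizes $E$, giving $e\le d+1$.

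Finally I would settle tightness with two small contexts. The context on $\{g_1,g_2\}$ and $\{m_1,m_2\}$ with $(g_i,m_j)\in I\iff i=j$ has a Boolean $2\times 2$ concept lattice ($d=2$) and an Euler-poset consisting of two disjoint two-element chains ($e=2$), so $e=d$; the context in \cref{fig:counterexample} has $d=2$ by construction and is not realizable by a one-dimensional Euler diagram, so by \cref{thm:1d} its Euler-poset is at least three-dimensional, and with $e\le d+1$ this forces $e=3=d+1$. I expect the main obstacle to be the lower bound: one must recognize that $\mathrm{DM}(E)$, rather than $E$ itself (into which $P$ does \emph{not} embed, since objects are pairwise incomparable there), is the correct carrier, and then verify via the cut-representation that $\iota$ reflects the order. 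The two invocations of ``$\mathrm{DM}$ preserves dimension'' and the identity $\B(\K)=\mathrm{DM}(P)$ are the other load-bearing facts.
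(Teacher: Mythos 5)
Your proposal is correct, and its skeleton matches the paper's: both hinge on the poset $\B_{AOC}$ of attribute- and object-concepts, on the fact that the Dedekind--MacNeille completion preserves order dimension, on transporting a realizer of the concept lattice and appending one extra linear extension for the upper bound, and on the contranominal scale plus the context of \cref{fig:counterexample} for tightness. The genuine divergence is the lower bound. The paper simply asserts that $\B_{AOC}$ sits inside the Euler-poset $E$ and restricts a realizer; as you rightly note, this cannot be read literally, since distinct object concepts may be comparable in $\B_{AOC}$ while all objects are pairwise incomparable in $E$. Your detour through $\mathrm{DM}(E)$ --- embedding $\B_{AOC}$ by $\mu(m)\mapsto m$ and $\gamma(g)\mapsto\bigwedge_{m\in g'}m$ and checking order reflection on the object parts of the cuts --- supplies the argument the paper elides, at the price of a second appeal to dimension-preservation under completion. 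Your upper bound is the paper's construction with the directions made consistent: ties $\gamma(g)=\mu(m)$ are broken with $g$ below $m$, and the extra extension places all objects below all attributes, so that it is itself a linear extension of $E$ and simultaneously reverses both the spurious object--object and attribute-below-object comparabilities; the paper's wording has these orientations the other way around, which as written would neither yield a linear extension of $E$ nor kill the attribute-below-object surplus. One small caveat: for tightness of the upper bound you import both $d=2$ and the non-realizability of the context in \cref{fig:counterexample} from its caption and then invoke \cref{thm:1d} to get $e\geq 3$, whereas the paper certifies $e=3$ directly by exhibiting a three-dimensional ``spider'' suborder; either route is fine, but the claim $d=2$ deserves a one-line verification in both.
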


\begin{proof}
    Let $\E$ be the Euler-poset of $\K$.
    Let $(\B_{AOC},\leq)$ be the ordered set that emerges if we restrict the concept lattice to concepts that can be generated by the derivation of a single attribute or object.
    It is well known, that the Dedekind-MacNeille completion of this poset is exactly the concept lattice $(\B,\leq)$.
    Because the Dedekind-MacNeille completion preserves order dimension, as shown in~\cite{Novak.1969}, we know that $dim(\B_{AOC}) = d$.

    For the lower bound, note that $\B_{AOC}\subset E$ and thus if we restrict the realizer of $\B_{AOC}$ to the elements of $E$ we receive a realizer of $E$ with the same number of linear extensions which proves the lower bound.
    To see that the lower bound is tight, consider the contranominal scale with $d$ objects and attributes.
    The concept lattice of this contranominal scale has order dimension $d$.
    Its Euler-poset is exactly the standard example $S_d$ which also has order dimension $d$.

    We now show the upper bound.
    Take a realizer of $\B_{AOC}$ and modify it as follows.
    Replace all attribute concepts by the attributes they correspond to and all object concepts by the objects they correspond to.
    If there are some concepts in $\B_{AOC}$ which are attribute-concepts and object-concepts at the same time, we can replace their element in the realizer by two consecutive elements.
    The smaller of these two elements will correspond to the attribute and the larger one to the object of the concept.
    The linear extensions $L_1,\ldots,L_d$ generated by the procedure will correspond to the Euler-poset, except that there still might be comparable objects.
    To fix this, generate an additional linear extension as follows.
    Take a linear extension of the so far generated realizer, without loss of generalization take $L_1$ and generate the modified linear order $L_{d+1}$ as follows.
    In $L_{d+1}$ we first have the attributes in the order as they appear in $L_1$ and above all attributes the objects in reversed order to $L_1$.
    Then $L_1,\ldots,L_{d+1}$ will be a linear extension of the Euler-poset.
    Tightness of the upper bound follows from the example in \cref{fig:counterexample}.
    The concept lattice of the formal concept has order dimension two, however its Euler-poset contains a “spider” on the elements $\{m_1,m_2,m_3,m_4,g_1,g_2,g_3\}$, which is well known to be three-dimensional.
\end{proof}

Thus, we cannot even compute one-dimensional Euler diagrams for all two-dimensional concept lattices.
Therefore, the investigation of higher dimensional Euler diagrams which we are going to do in the following section is of interest.

\section{Two-Dimensional Euler Diagrams}
\label{sec:2d}

The natural extension of one-dimensional Euler diagrams is two-dimensional Euler diagrams, where the intervals are replaced by rectangles and the one- by two-dimensional points.
Formally, they are defined as follows.

\begin{definition}
    A \emph{two-dimensional Euler diagram} $\E=(\J, P)$ is a set of closed rectangles  $\J \subset \I \times \I$ and a set of points $P \subset \R^2$.
    Let $\K=(G,M,I)$ be a formal context.
    We say that $\mathbb{E}$ \emph{corresponds} to $\K$ if there is a pair of bijective maps $\phi:M \to \J$ and $\psi:G \to P$, such that for all objects $g$ and all attributes $m$ it holds, that $(g,m)\in I$ iff $\phi(g)$ is contained in $\psi(m)$.
    We say that $\K$ \emph{can be represented by a two-dimensional Euler diagram} if there is a two-dimensional Euler diagram corresponding to $\K$.
\end{definition}

Every two-dimensional Euler diagram can be seen as the direct product of two one-dimensional diagrams.
That is, as the first interval of the rectangle together with the first dimension of the point forms a one-dimensional Euler diagram and the second interval together with the second dimension of the point the other one.
An object is exactly incident to an attribute, if in both diagrams the corresponding point is in the attribute interval.
So, in summary, the following corollary holds.

\begin{corollary}
    \label{cor:2d1d}
    A formal context $\K=(G,M,I)$ can be represented by a two-dimensional Euler diagram if and only if there are two one-dimensional Euler diagrams corresponding to two formal contexts $\K_1=(G,M,I_1)$ and $\K_1=(G,M,I_2)$ such that $I = I_1 \cap I_2$.
\end{corollary}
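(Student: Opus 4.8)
The plan is to prove both directions of the biconditional by exploiting the observation, already stated in the text, that a rectangle is the direct product of two intervals and that a point $(a,b)\in\R^2$ lies in a rectangle $(x,y)$ precisely when $a\in x$ \emph{and} $b\in y$. This conjunction is exactly what will turn geometric containment in $\R^2$ into the intersection $I=I_1\cap I_2$ of two incidence relations.

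For the forward direction, I would start from a two-dimensional Euler diagram $\E=(\J,P)$ corresponding to $\K$ under maps $\phi:M\to\J$ and $\psi:G\to P$. Each attribute $m$ is sent to a rectangle $\phi(m)=(x_m,y_m)$ and each object $g$ to a point $\psi(g)=(a_g,b_g)$. I would then project onto the two coordinate axes: define $\J_1=\{x_m\mid m\in M\}$ with $P_1=\{a_g\mid g\in G\}$, and $\J_2=\{y_m\mid m\in M\}$ with $P_2=\{b_g\mid g\in G\}$, giving one-dimensional Euler diagrams $\E_1=(\J_1,P_1)$ and $\E_2=(\J_2,P_2)$. These correspond to contexts $\K_1=(G,M,I_1)$ and $\K_2=(G,M,I_2)$ where $(g,m)\in I_j$ iff the $j$-th coordinate of $\psi(g)$ lies in the $j$-th interval of $\phi(m)$. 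The defining equivalence of the two-dimensional diagram then reads $(g,m)\in I \iff a_g\in x_m \text{ and } b_g\in y_m \iff (g,m)\in I_1 \text{ and } (g,m)\in I_2$, which is exactly $I=I_1\cap I_2$.

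For the converse, I would assume one-dimensional Euler diagrams $\E_1=(\J_1,P_1)$ for $\K_1=(G,M,I_1)$ and $\E_2=(\J_2,P_2)$ for $\K_2=(G,M,I_2)$ with $I=I_1\cap I_2$, and reassemble them into a single two-dimensional diagram. Writing $x_m,a_g$ for the interval and point in the first diagram and $y_m,b_g$ for those in the second, I would set $\phi(m)=(x_m,y_m)$ and $\psi(g)=(a_g,b_g)$. Running the same conjunction backwards, $\psi(g)$ is contained in $\phi(m)$ iff $a_g\in x_m$ and $b_g\in y_m$, iff $(g,m)\in I_1\cap I_2=I$, so $\E=(\J,P)$ with $\J=\{(x_m,y_m)\mid m\in M\}$ and $P=\{(a_g,b_g)\mid g\in G\}$ corresponds to $\K$. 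The only routine checks are that the two reassembled maps remain bijective, which follows because the component maps are bijections onto their respective point and rectangle sets.

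This argument is essentially a bookkeeping unfolding of the containment definition, so I do not anticipate a serious obstacle; the statement is really just making precise the ``direct product'' remark preceding it. The one point deserving mild care is bijectivity in the converse: after clarification, distinct attributes could in principle be sent to the same rectangle if they happened to agree in both coordinate intervals, so I would note that $I=I_1\cap I_2$ together with clarification of $\K$ prevents two attributes from having identical rectangles, and symmetrically for objects and points.
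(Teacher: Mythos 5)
Your proposal is correct and follows exactly the route the paper takes: the paper justifies this corollary by the informal ``direct product'' observation immediately preceding it, namely that containment of a point in a rectangle is the conjunction of the two coordinate containments, which you simply write out carefully as projection (forward direction) and reassembly (converse). Your added remark on bijectivity is a reasonable bit of extra care that the paper handles only implicitly via its standing clarification assumption.
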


We will now give a characterization on the existence of these two-dimensional Euler diagrams using the same concepts as in the one-dimensional case.
We once again define an ordered set and then show a connection between the order dimension of this order and the existence of the Euler diagram.

\begin{definition}
    Let $S=\{a_i\mid i \in [4]\}\cup \{b_i\mid i \in [4]\}$. The \emph{extended Euler-poset} of a formal context $\K=(G,M,I)$ is the ordered set $(G \cup G_1 \cup G_2 \cup M\cup S,\leq)$ with $G_i=\{g_i \mid g \in G\}$ for $i \in \{1,2\}$ where the following conditions hold:
    \begin{enumerate}[label=\roman*)]
        \item $\forall g \in G, m \in M: g < m \iff (g,m) \in I$,
        \item $\forall m_1,m_2 \in M : m_1 \leq m_2 \iff m_1' \subset m_2'$,
        \item $\forall g \in G: (g < g_1) \wedge (g < g_2)$,
        \item $\forall i,j \in [4]: a_i \leq b_j \iff i \neq j$,
        \item $\forall x \in (G \cup M), i \in [4]:  x < b_i$,
        \item $\forall g \in G: (g_1 > a_1) \wedge (g_1 > a_2) \wedge (g_2 > a_3) \wedge (g_2 > a_4)$
        \item all other pairs are incomparable.
    \end{enumerate}
\end{definition}

Note that for the elements of $G$ and $M$ the same comparison conditions hold as for the Euler-poset from the one-dimensional case.
Thus, the extended Euler-poset contains the Euler-poset as a suborder.
Now, we can characterize the existence of a two-dimensional Euler diagram using this extended Euler-poset.

\begin{theorem}
    A formal context $\K=(G,M,I)$ can be represented by a two-dimensional Euler diagram if and only if its extended Euler-poset has order dimension four.
\end{theorem}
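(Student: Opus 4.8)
The plan is to read a size-four realizer of the extended Euler-poset as two coordinate pairs and to connect these, through \cref{cor:2d1d}, to a pair of one-dimensional Euler diagrams. The auxiliary set $S$ induces a copy of the standard example $S_4$, whose only internal comparabilities are those of condition iv); since $S_4$ is four-dimensional and appears as an induced suborder (no $a_i$ is carried below a $b_i$ through the rest of the poset), the extended Euler-poset has dimension at least four \emph{regardless} of $\K$. Hence the theorem reduces to showing that a realizer of size \emph{exactly} four exists if and only if $\K$ admits a two-dimensional Euler diagram. Throughout I view a realizer $\{L_1,L_2,L_3,L_4\}$ geometrically, with the pair $(L_1,L_2)$ supplying the horizontal and $(L_3,L_4)$ the vertical interval of each element, as in the box-containment characterization of Golumbic and Scheinerman~\cite{Golumbic.1989}.

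For the forward direction I start from a two-dimensional Euler diagram and apply \cref{cor:2d1d} to obtain contexts $\K_1=(G,M,I_1)$ and $\K_2=(G,M,I_2)$ with $I=I_1\cap I_2$, each representable in one dimension. By \cref{thm:1d} I may take both coordinate diagrams \emph{conceptual}, so that interval nesting agrees with extent nesting in each axis; this is what lets the construction respect the attribute order (condition ii) of the extended poset. The two one-dimensional Euler-posets are then two-dimensional with realizers $\{L_1,L_2\}$ and $\{L_3,L_4\}$, in which the objects occur in mutually reversed order by the lemma on two-dimensional Euler-posets. I extend these four linear orders over the auxiliary elements: the $a_i,b_i$ are inserted following the standard realizer of $S_4$, aligning the reversal of the $i$-th critical pair with $L_i$, and each copy $g_1,g_2$ is placed just above its object and above the prescribed $a_i$, as demanded by conditions iii) and vi). A direct check that the intersection of the four orders returns exactly conditions i)--vii) certifies a realizer of size four, so the dimension is four.

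Conversely, given a realizer of size four, I first use the induced $S_4$ to fix the geometry: since there are four critical pairs and four extensions, up to relabelling $L_i$ is the extension reversing $(a_i,b_i)$, which forces the pairing of the extensions into the two axes. The heart of the argument---and the step I expect to be hardest---is to show that the copies $G_1,G_2$ together with conditions iii) and vi) force every object to occupy inverse positions within $(L_1,L_2)$ and within $(L_3,L_4)$; this is the two-dimensional counterpart of the reversed-object-order lemma, and it is exactly what allows an object to be realized as a point rather than a proper rectangle. Indeed, $g_1>a_1,a_2$ pins $g_1$ above all of $G\cup M$ in the horizontal axis while leaving it incomparable to the attributes vertically, and it is this asymmetry, propagated through $g<g_1$ and the symmetric constraints on $g_2$, that collapses the horizontal and vertical extent of each object. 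Once this inverse-order property is in hand, I run the coordinate-wise construction of \cref{alg:1d} on each axis to obtain two one-dimensional Euler diagrams with point objects, whose product is the desired two-dimensional Euler diagram; the incidence $(g,m)\in I$ is recovered from $g<m$ holding in all four extensions, i.e. point-in-rectangle in both coordinates. The main obstacle is thus precisely the degeneracy lemma forcing objects to points, while the remaining steps (the $S_4$ lower bound, the routine intersection check, and the appeal to \cref{cor:2d1d}) are comparatively mechanical.
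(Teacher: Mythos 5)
Your proposal follows essentially the same route as the paper's proof: the $S_4$ lower bound from the auxiliary elements, the forward construction that merges realizers of the two coordinate Euler-posets with the $a_i,b_i$ gadget, and the backward direction that identifies each $L_i$ by its reversed critical pair and then reduces to \cref{cor:2d1d} via the inverse-object-order claim. The one step you flag as hardest is exactly the paper's central Claim, which it settles by a short pigeonhole argument: for distinct objects $g,h$ the incomparable pairs $(g,h_2)$ and $(h,g_2)$ are forced into $g<h_2$ and $h<g_2$ in both $L_3$ and $L_4$, so each must be reversed in one of $L_1,L_2$, and they cannot both be reversed in the same extension since that would give $h_2<g<g_2<h<h_2$.
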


\begin{proof}
    ``$\Rightarrow$'':
    Let $\K=(G,M,I)$ be a formal context that can be represented by a two-dimensional Euler diagram.
    Projecting this two-dimensional Euler diagram on the first and on the second coordinate produces two one-dimensional Euler diagrams, call them $\E_1$ and $\E_2$.
    By \cref{cor:2d1d} it holds that  $(g,m) \in I$  for the corresponding point is in the corresponding interval for both of these diagrams.
    By \cref{thm:1d}, the Euler-poset of $\E_1$ and $\E_2$ have order dimension two.
    Let $L_1,L_2$ be  the realizer of $E_1$ and  $L_3, L_4$ the realizer of $E_2$.
    We now construct a realizer of the extended Euler-poset. Consider the following linear extensions.
    \begin{center}
        \begin{minipage}{.68\textwidth}
            \begin{itemize}
                \item [$\hat{L}_1$:] $a_1 < a_2 < a_3 < \tilde{L}_1 < b_4 < a_4 < \tilde{G}_1 < b_3 < b_2 < b_1$
                \item[$\hat{L}_2$:] $a_1 < a_2 < a_4 < \tilde{L}_2 < b_3 < a_3 < \tilde{G}_2 < b_4 < b_2 < b_1$
                \item[$\hat{L}_3$:] $a_1 < a_3 < a_4 < \tilde{L}_3 < b_2 < a_2 < \tilde{G}_3 < b_4 < b_3 < b_1$
                \item[$\hat{L}_4$:] $a_4 < a_3 < a_2 < \tilde{L}_4 < b_1 < a_1 < \tilde{G}_4 < b_2 < b_3 < b_4$
            \end{itemize}
        \end{minipage}
    \end{center}
    For those, $\tilde{L}_i$ is composed of the elements of $L_i$ with the following modification.
    Directly after each element $g$, its respective element $g_1$ is positioned if $i=1$ or $i =2$ and $g_2$ if $i = 3$ or $i = 4$.
    The sets $\tilde{G}_i$ are composed of the elements $G_2$ if $i=1$ or $i =2$ and of $G_1$ if $i=3$ or $i=4$.
    The order in $\tilde{G_i}$ is the same as the one of the set $G$ in $L_i$.
    Note that the constructed four linear extensions do in fact realize the extended Euler-poset which proofs the claim that the extended Euler-poset is at most four-dimensional.
    To see that it is at least four-dimensional, note that the $a_i$ and $b_i$ elements form a standard example on four elements which is well known to be four-dimensional.

    \begin{figure}[!t]
        \centering

        \scalebox{.8}{\includegraphics[width=\linewidth]{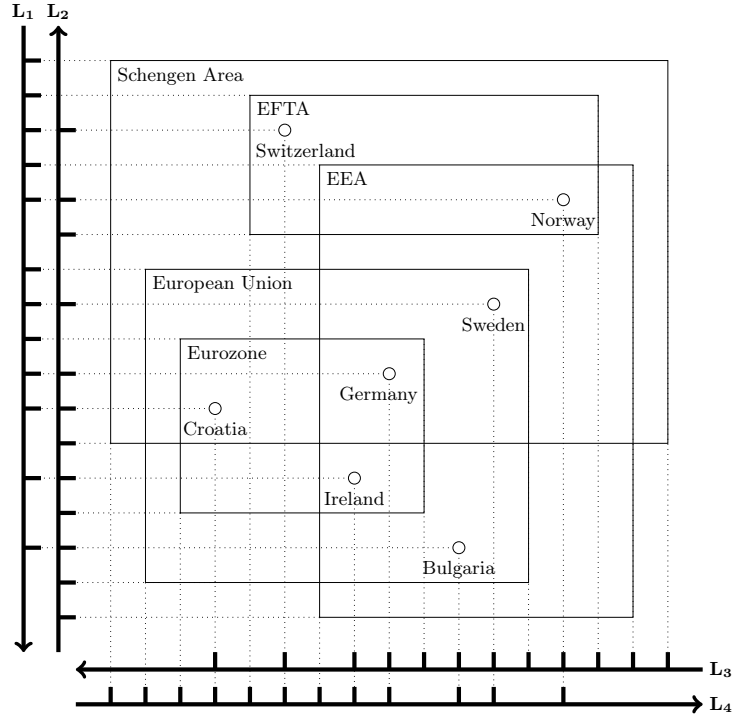}}

        \caption{The Euler diagram corresponding to the dataset of \cref{fig:eu} together with the four linear extensions that give rise to this visualization.}
        \label{2d_visualization}
    \end{figure}

    ``$\Leftarrow$'':
    For the other direction take a realizer of the extended Euler-poset with the four linear extensions $L_1, L_2, L_3$ and $L_4$.
    As the elements $a_i$ and $b_i$ for $i \in [4]$ form a standard example of dimension four, there is exactly one linear extension for each $i$, where $a_i > b_i$.
    Without loss of generality, let $L_i$ be the linear extension where $a_i > b_i$.

    \emph{Claim:} For any two objects $g$ and $h$ it holds that $g < h$ in $L_1$ iff $h < g$ in $L_2$ and $g < h$ in $L_3$ iff $h < g$ in $L_4$.

    Note that for $i= 3$ and $i = 4$ it holds in $L_i$ that $g < b_i < a_i < h_2$ and $h < b_i < a_i < g_2$.
    But the pairs $(g, h_2)$ and $(h, g_2)$ are incomparable, i.e., there has to be a linear extension in $L_1, L_2$ with $h2 < g$ and a linear extension in $L_1,L_2$ with $g_2 < h$.
    Assume that both of these inequalities are true in the same linear extension.
    Then $h_2 < g < g2 < h < h_2$ in this linear extension which is a contradiction.
    Thus, in one of the two linear extensions $L_1$ and $L_2$, it holds that $h < h_2 < g$ and in the other it holds that $g < g_2 < h$ what proofs the claim.
    By symmetry, the same is true for the linear extensions $L_3$ and $L_4$.

    Thus, the order of the elements of $G$ are in opposite order in $L_1$ and $L_2$ and the same is true for $L_3$ and $L_4$.
    Therefore, the linear extensions of $L_1$ and $L_2$ restricted to the elements $G$ and $M$ form a linear extension of an Euler-poset and thus, a one-dimensional Euler diagram by \cref{thm:1d}.
    The same is true for the linear extensions $L_3$ and $L_4$.
    Thus, we receive the two one-dimensional Euler-posets of \cref{cor:2d1d} which shows, that the Euler-poset is two-dimensional.
\end{proof}

As this proof is constructive, it composes an algorithm to compute two-dimensional Euler diagrams.
We first compute the extended Euler-poset and then a four-dimensional realizer.
Based on the order dimension we can then compute the linear extensions $L_1$, $L_2$, $L_3$ and $L_4$, which, if restricted to $G$ and $M$, are the linear extensions of two Euler-posets.
Finally, we can use the procedure for one-dimensional Euler diagrams to compute two one-dimensional Euler diagrams which are exactly the first and second component of the two-dimensional Euler diagram respectively.
The algorithm is given in \cref{alg:2d}.
A visualization of how the four computed linear extensions, restricted to the attributes and objects, are related to the positions of the rectangles is depicted in \cref{2d_visualization}.
The linear extensions $L_1$ and $L_2$ give the vertical dimension of the order diagram and $L_3$ and $L_4$ the horizontal dimension.
Once again, in each dimension one linear extension gives the start-point of the intervals and the other the end-points.

\begin{algorithm}[t]
    \caption{Euler2D}
    \label{alg:2d}
    \begin{lstlisting}
def Euler2D$(G,M,I)$:
    $(X, \leq)$ $=$ ExtendedEulerPoset$(G,M,I)$
    $R$ $=$ 4D_Realizer$(X,\leq)$
    for $i$ in $[4]$:
        $L_i$ $=$ $R[b_i < a_i]$
    $E_1$ $=$ EulerFromLinearExtension$(L_1, L_2)$
    $E_2$ $=$ EulerFromLinearExtension$(L_1, L_2)$
    return $E_1, E_2$
    \end{lstlisting}
\end{algorithm}

\section{Time Complexity of the Algorithms}
\label{sec:alg}

In this paper we propose two algorithms.
Both of them have in common that they  use a characterization based on the order dimension of an associated poset.
In both cases, the size of this poset is in $O(|M|+|G|)$.
To compute these posets, we have to compare an order based on the attribute derivations, here we have to do potentially $O(|M|^2+|M|\cdot|G|)$ comparisons.

The main difference is that the algorithm for one-dimensional Euler diagrams has to compute a two-dimensional realizer, while a realizer of size four is required in the two-dimensional case.
It is possible to compute a realizer of size two by the transitive orientation of its cocomparability graph.
This can be done in $(O(n)^2)$~\cite{Spinrad.1985}.
Thus, the overall time complexity of the resulting algorithm is in $O((|M|+|G|)^2)$.
The algorithm that computes a two-dimensional Euler diagram on the other hand relies on a subroutine that computes a four-dimensional realizer of an ordered set.
Unfortunately, it is NP-complete to decide, if the order dimension of an ordered set is $k \geq 3$~\cite{Yannakakis.1982}.
Thus, there are no efficient algorithms known to solve this step in general.
An exact way to compute the order dimension is to solve a hypergraph coloring problem on an associated hypergraph.
However, in our experience, even for small datasets the number of hyperedges gets out of hand.
We thus used an algorithm by Y{\'{a}}{\~{n}}ez and Montero~\cite{Yanez.1999} that constructs a graph coloring by restricting the hypergraph to its two-element edges.
Unfortunately, this algorithm is not guaranteed to compute a realizer of minimal size.
As the algorithm however reports, whether the realizer it finds is minimal, we can go back to the hypergraph coloring problem if it fails.
In any case, the algorithm will have an exponential runtime with respect to the input size.
Still, using the graph coloring algorithm, we were able to compute order diagrams for all datasets that we were interested in within seconds.

\section{Conclusion}
\label{sec:concl}

In this work, we investigated the realizability of Euler diagrams.
To be exact, we distinguished between two different types of Euler diagrams.
The first kind of diagrams can be visualized as one-dimensional intervals and points on the intervals, in the second kind have attributes as aligned rectangles in two-dimensional Euclidean space.
We related the two diagram types to each other.
For both diagrams, we were able to give a condition on their existence for a given dataset based on the order dimension of an associated order relation.
In the case of the one-dimensional order-diagrams, this characterization can be directly translated into an efficient polynomial-time algorithm to compute Euler diagrams.
The algorithm that follows for two-dimensional Euler diagrams is not polynomial.
Future work should focus on the development of heuristics to compute the two-dimensional diagrams.
In this context and for non-realizable datasets, it would also be helpful to weaken the definition of an Euler diagram.
One possibility to do so is to allow an interval or rectangle to appear multiple times or to not depict all incidences of the dataset.

\bibliographystyle{splncs04}
\bibliography{paper}

\end{document}